\newtheorem{theorem}{Theorem}
\newtheorem{lemma}[theorem]{Lemma}
\newtheorem{definition}[theorem]{Definition}
\newlength{\figurewidth}
\newlength{\smallfigurewidth}
\newcommand{\LCP}{\mathsf{LCP}}
\newcommand{\lcp}{\mathop{\mathsf{lcp}}}
\newcommand{\initstate}{s_0}
\newcommand{\vir}[1]{``#1''}
\begin{document}

\title
{\large
\textbf{Computing matching statistics on Wheeler DFAs}
}

\author{%
Alessio Conte$^{1}$, Nicola Cotumaccio$^{2, 3}$, Travis Gagie$^{3}$, Giovanni Manzini$^{1} $,\\ Nicola Prezza$^{4}$ and Marinella Sciortino$^{5}$\\[0.5em]
{\small\begin{minipage}{\linewidth}\begin{center}
$^{1}$ University of Pisa, Italy, \url{alessio.conte@unipi.it}, \url{giovanni.manzini@unipi.it} \\
$^{2}$ GSSI, L'Aquila, Italy, \url{nicola.cotumaccio@gssi.it} \\
$^{3}$ Dalhousie University, Halifax, Canada, \url{nicola.cotumaccio@dal.ca}, \url{travis.gagie@dal.ca}\\
$^{4}$ Ca' Foscari Unversity, Venice, Italy, \url{nicola.prezza@unive.it}\\
$^{5}$ University of Palermo, Italy, \url{marinella.sciortino@unipa.it}\\
\end{center}\end{minipage}}
}


\maketitle
\thispagestyle{empty}
 
 \vspace{10pt} 

\begin{abstract}
Matching statistics were introduced to solve the approximate string matching problem, which is a recurrent subroutine in bioinformatics applications. In 2010, Ohlebusch et al. [SPIRE 2010] proposed a time and space efficient algorithm for computing matching statistics which relies on some components of a compressed suffix tree - notably, the longest common prefix (LCP) array. 
In this paper, we show how their algorithm can be generalized from strings to Wheeler deterministic finite automata. Most importantly, we introduce a notion of LCP array for Wheeler automata, thus establishing a first clear step towards extending (compressed) suffix tree functionalities to labeled graphs.
\end{abstract}

\Section{Introduction}

Given a string $ T $ and a pattern $ \pi $, the classical formulation of the pattern matching problem requires to decide whether the pattern $ \pi $ occurs in the string $ T $ and, possibly, count the number of such occurrences and report the positions where they occur. The invention of the FM-index \cite{ferragina2000}, which is based on the Burrows-Wheeler transform \cite{burrows1994}, opened a new line of research in the pattern matching field. The indexing and compression techniques behind the FM-index deeply rely on the idea of suffix sorting, and over the years have been generalized from strings to trees \cite{ferragina2005}, De Brujin graphs \cite{BOSS, GCSA}, Wheeler graphs \cite{gagie2017, alanko2020} and arbitrary graphs \cite{cotumaccio2021, cotumaccio2022}. In particular, the class of Wheeler graphs is probably the one that captures the intuition behind the FM-index in the simplest way, and indeed the notion of Wheeler order has relevant consequences in automata theory \cite{alanko2020, alanko2021}.

However, in bioinformatics we are not only interested in exact pattern matching, but also in a myriad of variations of the pattern matching problem \cite{gusfield1997}. In particular, \emph{matching statistics} were introduced to solve the approximate pattern matching problem \cite{Chang2005SublinearAS}. A powerful data structure that is able to address the variations of the pattern matching problem at once is the \emph{suffix tree} \cite{weiner1973}. The main drawback of the suffix tree is its space consumption, which is non-negligible both in theory and in practice. As a consequence, the suffix tree has been replaced by the \emph{suffix array} \cite{manber1993}. While suffix arrays do not have all the functionalities of suffix trees, it has been shown that they can be augmented with some additional data structures --- notably, the longest common prefix (LCP) array --- so that it is possible to retrieve the full functionalities of a suffix trees \cite{abouelhoda2004}. All these components can be successfully compressed, leading to the so-called \emph{compressed suffix trees} \cite{sadakane2007}.

The natural question is whether it is possible to provide suffix tree functionalities not only to strings, but also to graphs, and in particular Wheeler graphs. In this paper, we provide a first partial affirmative answer by considering the problem of computing matching statistics. In 2010, Ohlebusch et al. \cite{ohlebusch2010} proposed a time and space efficient algorithm for computing matching statistics which relies on some components of a compressed suffix tree. In this paper, we show how their algorithm can be generalized from strings to Wheeler deterministic finite automata. Most importantly, we introduce a notion of longest common prefix (LCP) array for Wheeler automata, thus establishing an important step towards extending (compressed) suffix tree functionalities to labeled graphs.

\Section{Notation and first definitions}

Throughout the paper, we consider an alphabet $ \Sigma $ and a fixed total order $ \preceq $ on $ \Sigma $. We denote by $ \Sigma^* $ the set of all finite strings on $ \Sigma $ and by $ \Sigma^\omega $ the set of all (countably) infinite strings on $ \Sigma $. The empty word is $ \epsilon $. If $ \alpha \in \Sigma^* $, then $ \alpha^R $ is the reverse string of $ \alpha $. We extend the total order $ \preceq $ from $ \Sigma $ to $ \Sigma^* \cup \Sigma^\omega $ lexicographically. If $ i $ and $ j $ are integers, with $ i \le j $, define $ [i,j] = \{i, i + 1, \dots, j - 1, j \} $. If $ T $ is a string, the $ i $-th character of $ T $ is $ T[i] $, and $ T[i..j] = T[i]..T[j] $.

We will consider deterministic automata $ \mathcal{A} = (Q, E, \initstate, F) $, where $ Q $ is the set of states, $ E \subseteq Q \times Q \times \Sigma $ is the set of labeled edges, $ \initstate \in Q $ is the initial state and $ F \subseteq Q $ is the set of final states. The definition implies that for every $ u \in Q $ and for every $ a \in \Sigma $ there exists at most one edge labeled $ a $ leaving $ u $. Following \cite{alanko2020, alanko2021}, we assume that $ \initstate $ has no incoming edges, and every state is reachable from the initial state; moreover, all edges entering the same state have the same label (\emph{input-consistency}), so that for every $ u \in Q \setminus \{\initstate\} $ we can let $ \lambda (u) $ be the label of all edges entering $ u $. We define $ \lambda (\initstate) = \# $, where $ \# \not \in \Sigma $ is a special character for which we assume $ \# \prec a $ for every $ a \in \Sigma $ (the character $ \# $ is an analogous of the termination character $ \$ $ used for suffix trees and suffix arrays). As a consequence, an edge $ (u', u, a) $ can be simply written as $ (u', u) $, because it must be $ a = \lambda (u) $.

We assume familiarity with the notions of suffix array (SA), Burrows Wheeler transform (BWT), FM-index and backward search~\cite{ferragina2000}.

The \emph{matching statistics} of a pattern $ \pi = \pi[1..m] $ with respect to a string $ T = T[1..n] $ are defined as follows. Assume that $ T[n] = \$ \not \in \Sigma $, where $ \$ \prec a $ for every $ a \in \Sigma $. Determining the matching statistics of $ \pi $ with respect to $ T $ means determining, for $ 1 \le i \le m $, (i) the longest prefix $ \pi' $ of $ \pi[i..m] $ which occurs in $ T $, and (ii) the interval corresponding to the set of all strings starting with $ \pi' $ in the list of all lexicographically sorted suffixes. We can describe (i) and (ii) by means of three values: the length $ \ell_i $ of $ \pi' $, and the endpoints $ l_i $ and $ r_i $ of the interval considered in (ii). For example, let $ T = mississippi\$ $ (see Figure \ref{fig:mississippi}), and $ \pi = stpissi $. For $ i = 1 $, we have $ \pi' = s $, so $ \ell_1 = 1 $ and $ [l_1, r_1] = [9, 12] $ (suffixes starting with $ s $). For $ i = 2 $, we have $ \pi' = \epsilon $, so $ \ell_2 = 0 $ and $ [l_2, r_2] = [1, n] = [1, 12] $ (all suffixes start with the empty string). For $ i = 3 $, we have $ \pi' = pi $, so $ \ell_3 = 2 $, and $ [l_3, r_3] = [7, 7] $ (suffixes starting with $ pi $). For $ i = 4 $, we have $ \pi' = issi $, so $ \ell_4 = 4 $, and $ [l_4, r_4] = [4, 5] $ (suffixes starting with $ issi $). One can proceed analogously for $ i = 5, 6, 7 $.

\begin{figure}
    \centering
        \begin{tabular}{|c|l|c|c|c|c|}
        \hline
            $ i $ & Sorted suffixes & LCP & SA & BWT \\
        \hline
            1 & \$ & & 12 & i  \\
            2 & i\$ & 0 & 11 & p  \\
            3 & ippi\$ & 1 & 8 & s \\
            4 & issippi\$ & 1 & 5 & s  \\
            5 & ississippi\$ & 4 & 2 & m \\
            6 & mississippi\$ & 0 & 1 & \$ \\
            7 & pi\$ & 0 & 10 & p  \\
            8 & ppi\$ & 1 & 9 & i  \\
            9 & sippi\$ & 0 & 7 & s \\
            10 & sissippi\$ & 2 & 4 & s \\
            11 & ssippi\$ & 1 & 6 & i  \\
            12 & ssissippi\$ & 3 & 3 & i \\
        \hline
        \end{tabular}
    \caption{The sorted suffixes of ``mississippi\$'' and the LCP, SA, and BWT arrays.}
    \label{fig:mississippi}
\end{figure} 

\Section{Computing matching statistics for strings}

We will first describe the algorithm by Ohlebusch et al. \cite{ohlebusch2010}, emphasizing the ideas that we will generalize when switching to Wheeler DFAs. The algorithms computes the matching statistics using a number of iterations linear in $ m $ by exploiting the backward search. We start from the end of $ \pi $, and we use the backward search (starting from the interval $ [1, n] $ which corresponds to the set of suffixes prefixed by the empty string) to find the interval of all occurrences of the last character of $ \pi $ in $ T $ (if any). Then, starting from the new interval, we use the backward search to find all the occurrences of the suffix of length 2 of $ \pi $ in $ T $ (if any), and so on. At some point, it may happen that for some $ i \le m + 1 $ we have that $ \pi[i..m] $ occurs in $ T $, but the next application of the backward search returns the empty interval, so that $ \pi[i - 1..m] $ does not occur in $ T $ (the case $i=m+1$ corresponds to the initial setting when $\pi[i..m]$ is the empty string). We distinguish two cases:
\begin{itemize}
    \item (Case 1) If $ l_i = 1 $ and $ r_i = n $, this means that all suffixes of $ T $ are prefixed by $ \pi[i..m] $. This may happen in particular if $ i = m + 1 $: this means that the first backward search has been unsuccessful. We immediately conclude that character $ \pi[i - 1] $ does not occur in $ T $, so $ \ell_{i - 1} = 0 $ and $ [l_{i - 1}, r_{i - 1}] = [1, n] $ (because all suffixes start with the empty string). In this case, in the following iterations of the algorithm, we can simply discard $ \pi[i - 1, m] $: when for $ i' \le i - 2 $ we will be searching for the longest prefix of $ \pi[i', m] $ occurring in $ T $, it will suffice to search for the longest prefix of $ \pi[i', i - 2] $ occurring in $ T $.
    \item (Case 2) If $ l_i > 1 $ or $ r_i < n $, this means that the number of suffixes of $ T $ starting with $ \pi[i..m] $ is less than $ n $. Now, every suffix starting with $ \pi[i..m] $ also starts with $ \pi[i..m - 1] $. If the number of suffixes starting with $ \pi[i..m - 1] $ is equal to the number of suffixes starting with $ \pi[i..m] $, then also $ \pi[i - 1..m - 1] $ does not occur in $ T $. More in general, for $ j \le m - 1 $ we can have that $ \pi[i - 1..j] $ occurs in $ T $  only if the number of suffixes starting with $ \pi[i..j] $ is larger than the number of suffixes starting with $ \pi[i..m] $. Since we are interested in maximal matches, we want $ j $ to be as large as possible: we will show later how to compute the largest integer $ j $ such that the number of suffixes starting with $ \pi[i..j] $ is larger than the number of suffixes starting with $ \pi[i..m] $.
    Notice that $ j $  always exists, because all $ n $ suffixes start with the empty string, but less than $ n $ suffixes start with $ \pi[i..m] $. After determining $j$ we discard $ \pi[j + 1..m] $ (so in the following iterations of the algorithm we will simply consider $ \pi[1..j] $), and we recursively apply the backward search starting from the interval associated with the occurrences of $ \pi [i..j] $ --- we will also see how to compute this interval.
\end{itemize}
Let us apply the above algorithm to $ T = mississippi\$ $ and $ \pi = stpissi $. We start with the interval $ [1, n] = [1, 12] $, corresponding to the empty pattern, and character $ \pi[7] = i $. A backward step yields the interval $ [l_7, r_7] = [2, 5] $ (suffixes starting with $ i $), so $ \ell_7 = 1 $. Now, we apply a backward step from $ [2, 5] $ and $ \pi[6] = s $, obtaining $ [l_6, r_6] = [9, 10] $ (suffixes starting with $ si $), so $ \ell_6 = 2 $. Again, we apply a backward step from $ [9, 10] $ and $ \pi[5] = s $, obtaining $ [l_5, r_5] = [11, 12] $ (suffixes starting with $ ssi $), so $ \ell_5 = 3 $. Again, we apply a backward step from $ [11, 12] $ and $ \pi[4] = i $, obtaining $ [l_4, r_4] = [4, 5] $ (suffixes starting with $ issi $), so $ \ell_4 = 4 $. We now apply a backward step from $ [4, 5] $ and $ \pi[3] = p $, and we obtain the empty interval. This means that no suffix starts with $ pissi $. Notice in Figure \ref{fig:mississippi} that the number of suffixes starting with $ issi $ is equal to the number of suffixes starting with $ iss $ or $ is $, but the number of suffixes starting with $ i $ is bigger. As a consequence, we consider the interval of all suffixes starting with $ i $ --- which is $ [2, 5] $ --- and we apply a backward step with $ \pi[3] = p $. This time the backward step is successful, and we obtain $ [l_3, r_3] = [7, 7] $ (suffixes starting with $ pi $), and $ \ell_3 = 2 $. We now apply a backward step from $ [7, 7] $ and $ \pi[2] = t $, obtaining the empty interval. This means that no suffix starts with $ tpi $. Notice in Figure \ref{fig:mississippi} that the number of suffixes starting with $ p $ is bigger than the number of suffixes starting with $ pi $. The corresponding interval is $ [7, 8] $, but a backward step with $ \pi[2] = t $ is still unsuccessful (so no suffix starts with $ tp $). The number of suffixes starting with $ p $ is smaller than the number of suffixes starting with the empty string (which is equal to $ n = 12 $), so we apply a backward step with $ [1, 12] $ and $ \pi[2] = t $. Since the backward step is still unsuccessful, we conclude that $ \pi[2] = t $ does not occur in $ S $, so $ [l_2, r_2] = [1, n] = [1, 12] $ and $ \ell_2 = 0 $. Finally, we start again from the whole interval $ [1, 12] $, and a backward step with $ \pi[1] = s $ returns $ [l_1, r_1] = [9, 12] $ (suffixes starting with $ s $), so $ \ell_1 = 1 $.

It is easy to see that the number of iterations is linear in $ m $. Indeed, every time we apply a backward step, either we move to the left across $ \pi $ to compute a new matching statistic, or we increase by at least 1 the length of the suffix of $ \pi $ which is forever discarded. This implies that the number of iterations is bounded by $ 2 |\pi| = 2 m $.

We are only left with showing (i) how to compute $ j $ and (ii) the interval of all suffixes starting with $ \pi [i..j] $ in Case 2 of the algorithm. To this end, we introduce the longest common prefix (LCP) array $ \LCP = \LCP[2, n] $ of $ T $. We define $ \LCP [i] $ to be the length of the longest common prefix of the $ (i - 1) $-st lexicographically smallest suffix of $ T $ and the $ i $-th lexicographically smallest suffix of $ T $. In Figure \ref{fig:mississippi} we have $ \LCP[5] = 4 $ because the fourth lexicographically smallest suffix of $ T $ is $ issippi\$ $, the fifth lexicographically smallest suffix of $ T $ is $ ississippi\$ $, and the longest common prefix of $ issippi\$ $ and $ ississippi\$ $ is $ issi $, which has length $ 4 $. Remember that in the example the backward search starting from $ [4, 5] $ (suffixes starting with $ issi $) and $ p $ was unsuccessful, so computing $ j $ means determining the longest prefix of $ issi $ such that the the number of suffixes starting with such a prefix is bigger than $ 2 $. This is easy to compute by using the LCP array: the longest such prefix is the one of length $ \max \{\LCP[4], \LCP[6] \} = \max \{1, 0 \} = 1 $, so that the desired prefix is $ i $. As a consequence, we are only left with showing how to compute the interval of all suffixes starting with the prefix $ i $ --- which is $ [2, 5] $. Notice that in order to compute this interval, it is enough to expand the interval $ [4, 6] $ in both directions as long as the LCP value does not go below~$1$. Since $ \LCP[4] = 1 $, $ \LCP[3] = 1 $, and $ \LCP[2] = 0 $, and we already know that $ \LCP[6] = 0 $, we conclude that the desired interval is $ [2, 5] $. In other words, given a position $ t $, we must be able to compute the biggest integer $ k $ less than $ t $ such that $ \LCP[k] < \LCP[t] $, and the smallest integer $ k $ bigger than $ t $ such that $ \LCP[k] < \LCP[t] $ (in our case, $ t = 4 $). These queries are called PSV (\vir{previous smaller value}) and NSV (\vir{next smaller value}) queries. The LCP array can be augmented in such a way that PSV and NSV queries can be solved efficiently: different space-time trade-offs are possible, we refer the reader to \cite{ohlebusch2010} for details. 



\Section{Matching statistics for Wheeler DFAs}

Let us define Wheeler DFAs \cite{alanko2020}.

\begin{definition}\label{def:wheeler}
Let $ \mathcal{A} = (Q, E, \initstate, F) $ be a DFA. A \emph{Wheeler order} on $ \mathcal{A} $ is a total order $ \le $ on Q such that $ \initstate \le u $ for every $ u \in Q $ and:
\begin{enumerate}
    \item[] \quad  (Axiom 1) If $ u,v \in Q $ and $ u < v $, then $ \lambda(u) \preceq \lambda (v) $.
    \item[] \quad (Axiom 2) If $ (u', u), (v', v) \in E $, $ \lambda (u) = \lambda (v)  $ and $ u < v $, then $ u' < v' $.
\end{enumerate}
A DFA $\mathcal{A}$ is \emph{Wheeler} if it admits a Wheeler order. 


\end{definition}

\begin{figure}
	\centering
	\begin{tikzpicture}[shorten >=1pt,node distance=1.6cm,on grid,auto]
	\tikzstyle{every state}=[fill={rgb:black,1;white,10}]
	
	\node[state, accepting]   (2)                          {$ 2 $};
	\node[state]           (5)  [below of=2]    {$ 5 $};
	\node[state]           (6)  [right of=5]    {$ 6 $};
	\node[state]           (7)  [right of=6]    {$ 7 $};
	\node[state]           (8)  [right of=7]    {$ 8 $};
	\node[state]           (9)  [right of=8]    {$ 9 $};
	\node[state, accepting]           (3)  [above of=7]    {$ 3 $};
	\node[state, accepting]           (4)  [above of=9]    {$ 4 $};
	\node[state]           (10)  [below of=5]    {$ 10 $};
	\node[state]           (11)  [below of=6]    {$ 11 $};
	\node[state]           (12)  [below of=7]    {$ 12 $};
	\node[state]           (13)  [below of=8]    {$ 13 $};
	\node[state]           (14)  [below of=9]    {$ 14 $};
	\node[state]           (15)  [below of=10]    {$ 15 $};
	\node[state]           (16)  [below of=11]    {$ 16 $};	
	\node[state]           (17)  [below of=12]    {$ 17 $};
	\node[state]           (18)  [below of=13]    {$ 18 $};
	\node[state]           (19)  [below of=14]    {$ 19 $};
	\node[state, initial]   (1)  [below of=17]    {$ 1 $};
	\path[->]
	(5) edge node {a}    (2)
	(6) edge node {a}    (3)
	(7) edge node {a}    (3)
	(8) edge node {a}    (3)
	(9) edge node {a}    (4)
	(10) edge node {b}   (5)
	(11) edge node {b}   (6)
	(12) edge node {b}   (7)
	(13) edge node {c}   (8)
	(14) edge node {c}   (9)
	(15) edge node {d}   (10)
	(16) edge node {d}   (11)
	(17) edge node {e}   (12)
	(18) edge node {e}   (13)
	(19) edge node {e}   (14)
	(1) edge [bend left = 50] node {f}   (15)
	(1) edge node {g}   (16)
	(1) edge node {h}   (17)
	(1) edge node {i}   (18)
	(1) edge [bend right = 50] node {l}   (19)
	(2) edge [loop right] node {a}   (2)	
	;
	\end{tikzpicture}
	\caption{A Wheeler DFA. States are numbered according to their positions in the Wheeler order.}\label{fig:wheelerdfa}
\end{figure}
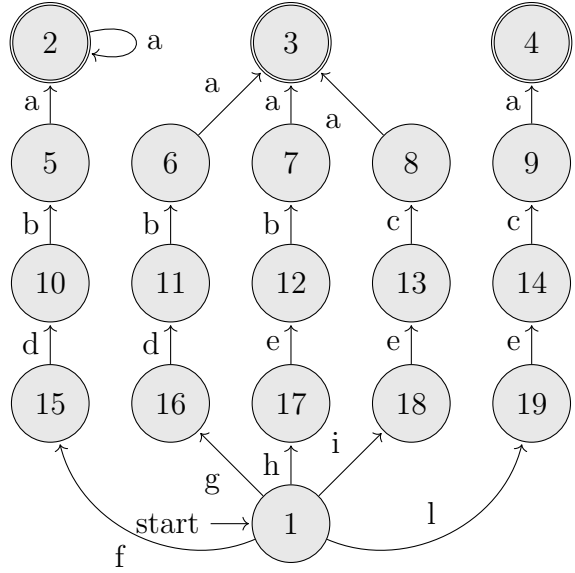

It is immediate to check that this definition is equivalent to the one in \cite{alanko2020}, where it was shown that if a DFA $ \mathcal{A} $ admits a Wheeler order $ \le $, then $ \le $ is uniquely determined (that is, $ \le $ is \emph{the} Wheeler order on $ \mathcal{A} $). In the following, we fix a Wheeler DFA $ \mathcal{A} = (Q, E, \initstate, F) $, where we assume $ Q = \{u_1, \dots, u_n \} $, with $ u_1 < u_2 < \dots < u_n $ in the Wheeler order, and $ u_1 $ coincides with the initial state $\initstate $. See Figure \ref{fig:wheelerdfa} for an example.

We now show that a Wheeler order can be seen of as a permutation of the set of all states playing the same role as the suffix array of a string. In the following, it will be expedient to (conceptually) assume that $ \initstate $ has a self-loop labeled $ \# $ (this is consistent with Axiom 1, because $ \# \prec a $ for every $ a \in \Sigma $). This implies that every state has at least one incoming edge, so for every state $ u_i $ there exists at least one infinite string $ \alpha \in \Sigma^\omega $ that can be read starting from $ u_i $ and following edges in a backward fashion (for example, in Figure \ref{fig:wheelerdfa} for $ u_9 $ such a string is $ cel\#\#\#\dots $). We denote by $ I_{u_i} $ the set of all such strings. Formally:

\begin{definition}
Let $ i \in [1, n] $. For every state $u_i \in Q$ define:
\begin{equation*}
\begin{split}
    I_{u_i} = \{\alpha \in \Sigma^\omega \; | \; \text{there exist integers $ f_1, f_2 , \dots $ in $ [1, n] $ such that (i) $ f_1 = i $,} \\
    \text{(ii) $ (u_{f_{k + 1}}, u_{f_k}) \in E $ for every $ k \ge 1 $ and (iii) $ \alpha = \lambda (u_{f_1}) \lambda (u_{f_2}) \dots $} \}.
\end{split}    
\end{equation*}
\end{definition}
For example, in Figure \ref{fig:wheelerdfa} we have $ I_{u_3} = \{abdg\#\#\#\dots, abeh\#\#\#\dots, acei\#\#\#\dots \} $.

The following lemma shows that the permutation of the states defined by the Wheeler order is the one lexicographically sorting the strings entering each state, just like the permutation defined by the suffix array lexicographically sorts the suffixes of the strings (a suffix is seen as a string \vir{leaving} a text position).

\begin{lemma}\label{lem:suffixinterval}
Let $ i, j \in [1, n] $, with $ i < j $. Let $ \alpha \in I_{u_i} $ and $ \beta \in I_{u_j} $. Then, $ \alpha \preceq \beta $.
\end{lemma}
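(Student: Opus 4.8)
The plan is to unfold the two strings $\alpha$ and $\beta$ along the backward paths that witness their membership in $I_{u_i}$ and $I_{u_j}$, and then to show by induction that these two paths stay synchronized in the Wheeler order for as long as the two strings agree character by character. Concretely, I would fix integer sequences $f_1, f_2, \dots$ and $g_1, g_2, \dots$ in $[1,n]$ with $f_1 = i$, $g_1 = j$, with $(u_{f_{k+1}}, u_{f_k}) \in E$ and $(u_{g_{k+1}}, u_{g_k}) \in E$ for every $k \ge 1$, and with $\alpha = \lambda(u_{f_1})\lambda(u_{f_2})\cdots$ and $\beta = \lambda(u_{g_1})\lambda(u_{g_2})\cdots$; such sequences exist by the definition of $I_{u_i}$ and $I_{u_j}$. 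In particular $\alpha[k] = \lambda(u_{f_k})$ and $\beta[k] = \lambda(u_{g_k})$ for every $k \ge 1$.

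The heart of the argument is the following claim, proved by induction on $k \ge 1$: if $\alpha[h] = \beta[h]$ for all $h < k$, then $f_k < g_k$. The base case $k = 1$ has a vacuous hypothesis and holds because $f_1 = i < j = g_1$. For the inductive step at $k+1$, assume $\alpha[h] = \beta[h]$ for all $h \le k$. Restricting the hypothesis to $h < k$ and applying the inductive hypothesis gives $f_k < g_k$, hence $u_{f_k} < u_{g_k}$; moreover $\alpha[k] = \beta[k]$ means $\lambda(u_{f_k}) = \lambda(u_{g_k})$. Applying Axiom 2 to the incoming edges $(u_{f_{k+1}}, u_{f_k})$ and $(u_{g_{k+1}}, u_{g_k})$ then yields $u_{f_{k+1}} < u_{g_{k+1}}$, that is $f_{k+1} < g_{k+1}$, which is the claim at index $k+1$.

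With the claim in hand I would conclude by a dichotomy. If $\alpha = \beta$ there is nothing to prove. Otherwise let $k$ be the first index at which $\alpha$ and $\beta$ differ, so that $\alpha[h] = \beta[h]$ for all $h < k$ and $\alpha[k] \ne \beta[k]$. The claim gives $f_k < g_k$, hence $u_{f_k} < u_{g_k}$, and Axiom 1 forces $\lambda(u_{f_k}) \preceq \lambda(u_{g_k})$, i.e. $\alpha[k] \preceq \beta[k]$. Since the two characters are distinct, $\alpha[k] \prec \beta[k]$, and therefore $\alpha \prec \beta$, which in particular gives $\alpha \preceq \beta$.

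I anticipate the main obstacle to be arranging the induction so that Axiom 2 may be invoked legitimately at each step. Axiom 2 requires simultaneously that the two states currently reached carry the same incoming label and that they are already comparable in the prescribed direction of the order; it is precisely the prefix-equality hypothesis (supplying the equal labels $\lambda(u_{f_k}) = \lambda(u_{g_k})$) together with the inductive ordering $f_k < g_k$ that furnishes these two preconditions at once. Once this bookkeeping is set up, the two axioms do all the work — Axiom 2 propagates the state order one step deeper along the backward paths, and Axiom 1 converts the state order at the first mismatch into the desired lexicographic inequality — and no further structural properties of the automaton are needed.
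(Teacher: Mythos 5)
Your proposal is correct and follows essentially the same argument as the paper: both propagate the strict state inequality $f_k < g_k$ along the backward paths via Axiom 2 up to the first mismatch, then apply Axiom 1 there to conclude $\alpha \prec \beta$. The only difference is presentational — you package the paper's ``by iterating this argument'' step as an explicit induction, which is a slightly more rigorous formulation of the identical idea.
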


\begin{proof}
Let $ f_1, f_2 , \dots $ in $ [1, n] $ be such that (i) $ f_1 = i $, (ii) $ (u_{f_{k + 1}}, u_{f_k}) \in E $ for every $ k \ge 1 $ and (iii) $ \alpha = \lambda (u_{f_1}) \lambda (u_{f_2}) \dots $. Analogously, let $ g_1, g_2 , \dots $ in $ [1, n] $ be such that (i) $ g_1 = j $, (ii) $ (u_{g_{k + 1}}, u_{g_k}) \in E $ for every $ k \ge 1 $ and (iii) $ \beta = \lambda (u_{g_1}) \lambda (u_{g_2}) \dots $. Let $ \alpha \not = \beta $. We must prove that $ \alpha \prec \beta $. Let $ p \ge 1 $ be the smallest integer such that the $ p $-th character of $ \alpha $ is different than the $ p $-th character of $ \beta $. In other words, we know that $ \lambda (u_{f_1}) = \lambda (u_{g_1}) $, $ \lambda (u_{f_2}) = \lambda (u_{g_2}) $, $ \dots $, $ \lambda (u_{f_{p - 1}}) = \lambda (u_{g_{p - 1}}) $, but $ \lambda (u_{f_p}) \not = \lambda (u_{g_p}) $. We must prove that $ \lambda (u_{f_p}) \prec \lambda (u_{g_p}) $. Since $ \lambda (u_{f_1}) = \lambda (u_{g_1}) $  $ f_1 = i < j = g_1 $, and $ (u_{f_{2}}, u_{f_1}), (u_{g_{2}}, u_{g_1}) \in E $, from Axiom 2 we obtain $ f_2 < g_2 $. Since $ \lambda (u_{f_2}) = \lambda (u_{g_2}) $, $ f_2 < g_2 $, and $ (u_{f_{3}}, u_{f_2}), (u_{g_{3}}, u_{g_2}) \in E $, from Axiom 2 we obtain $ f_3 < g_3 $. By iterating this argument, we conclude $ f_p < g_p $. By Axiom 1, we obtain $ \lambda (u_{f_p}) \preceq \lambda (u_{g_p}) $. Since $ \lambda (u_{f_p}) \not = \lambda (u_{g_p}) $, we conclude $ \lambda (u_{f_p}) \prec \lambda (u_{g_p}) $.
\end{proof}

If we think of a string as a labeled path, then the suffix array sorts the strings that can be read from each position by moving forward (that is, the suffixes of the string), while the Wheeler order sorts the strings that can be read from each position by moving backward towards the initial state. The underlying idea is the same: the forward vs backward difference is only due to historical reasons~\cite{gagie2017}. To compute the matching statistics on Wheeler DFA we reason as in the previous section replacing backward search with the \emph{forward search}~\cite{gagie2017} defined as follows: given an interval $ [i, j] $ in $ [1, n] $ and $ a \in \Sigma $, find the (possibly empty) interval $ [i', j'] $ in $ [1, n] $ such that a state $ v_{k'} $ is reachable from some state $ v_k $, with $ i \le k \le j $, through an edge labeled $ a $, if and only if $ i' \le k' \le j' $ (this easily follows by using the axioms of Definition \ref{def:wheeler}). 
For a constant size alphabet, given $ [i, j] $ and $ a $ then $ [i', j'] $ can be determined in constant time. Given a string $ \pi \in \Sigma^* $, if we start from the whole set of states and repeatedly apply the forward search we reach the set of all states $ u_i $ for which there exists $ \alpha \in I_{u_i} $ prefixed by $ \pi^R $; this is an interval with respect to the Wheeler order: in the following we call this interval $ T(\pi) $.

Because of the forward vs backward difference the problem of matching statistics will be defined in a symmetrical way on Wheeler DFAs. Given a pattern $ \pi = \pi[1..m] $, for every $ 1 \le i \le m $ we want to determine (i) the longest suffix $ \pi' $ of $ \pi[1..i] $ which occurs in the Wheeler DFA $ \mathcal{A} $ (that is, that can be read somewhere on $ \mathcal{A} $ by concatenating edges), and $ (ii) $ the endpoints of the interval $ T(\pi') $.  


Broadly speaking, we can apply the same idea of the algorithm for strings, but in a symmetrical way. We start from the \emph{beginning}  of $ \pi $ (not from the end of $ \pi) $, and initially we consider the whole set of states. We repeatedly apply the \emph{forward} search (not the backward search), until the forward search returns the empty interval for some $ i \ge 0 $. This means that $ \pi[1..i + 1] $ does not occur in $ \mathcal{A} $. Then, if $T(\pi[1..i])$ is the whole set of states, we conclude that the character $ \pi[i + 1] $ labels no edge in the graph. Otherwise, we must find the smallest $ j $ such that $ T(\pi[1..i]) $ is strictly contained in $ T(\pi[j..i]) $ (that is, we must determine the longest suffix $ \pi[j..i] $ of $ \pi[1..i] $ which reaches more states than $ \pi[1..i] $). Then we must determine the endpoints of the interval $ T(\pi[j..i]) $ so that we can go on with the forward search.

The challenge now is to find a way to solve the same subproblems that we identified in Case 2 of the algorithm for strings. In other words, we must find a way to determine $ j $ and find the endpoints of the interval $ T(\pi[j..i]) $. We will show that the solution is not as simple as the one for the algorithm on strings.

\Section{The LCP array and matching statistics for Wheeler DFAs} 


We start observing that $ I_{u_i} $ may be an infinite set. For example, in Figure \ref{fig:wheelerdfa}, we have
\begin{equation*}
    I_{u_2} = \{aaaaa\dots, abdf\#\#\#\dots, aabdf\#\#\#\dots, aaabdf\#\#\#\dots, \dots \}.
\end{equation*}

In general, an infinite set of (lexicographically sorted) strings in $ \Sigma^\omega $ need not admit a minimum or a maximum. For example, the set $ \{baaaa\dots, abaaa\dots, aabaa\dots, aaaba\dots \} $ does not admit a minimum (but only the \emph{infimum} string $ aaaaa\dots $). Nonetheless, Lemma \ref{lem:suffixinterval} implies that each $ I_{u_i} $ admits both a minimum and a maximum. For example, the minimum is obtained as follows. Let $ f_1 = i $, and for every $ k \ge 1 $, recursively let $ f_{k + 1} $ be the smallest integer in $ [1, n] $ such that $ (u_{f_{k + 1}}, u_{f_k}) \in E $. Then, the minimum of $ I_{u_i} $ is $ \lambda (u_{f_1}) \lambda (u_{f_2}) \dots $, and analogously one can determine the maximum.

In the following, we will denote the minimum and the maximum of $ I_{u_i} $ by $ \min_i $ and $ \max_i $, respectively (for example, in Figure \ref{fig:wheelerdfa} we have $ \min_2 = aaaaa\dots $, and $ \max_2 = abdf\#\#\#\dots $). Lemma \ref{lem:suffixinterval} implies that:
\begin{equation*}
    \mathrm{min}_1 \preceq \mathrm{max}_1 \preceq \mathrm{min}_2 \preceq \mathrm{max}_2 \preceq \dots \preceq \mathrm{max}_{n - 1} \preceq \mathrm{min}_n \preceq \mathrm{max}_n.
\end{equation*}
This suggests to generalize the LCP array as follows. Given $ \alpha, \beta \in \Sigma^* \cup \Sigma^\omega $, let $ \lcp(\alpha, \beta) $ be the length of the longest common prefix of $ \alpha $ and $ \beta $ (if $ \alpha = \beta \in \Sigma^\omega $, define $ \lcp(\alpha, \beta) = \infty $).

\begin{definition}
The \emph{LCP-array} of a Wheeler automaton $ \mathcal{A} $ is the array $ \LCP_\mathcal{A} = \LCP_\mathcal{A}[2, 2n] $ which contains the following $ 2n - 1 $ values in this order: $ \lcp(\min_1, \max_1) $, $ \lcp(\max_1, \min_2) $, $ \lcp(\min_2, \max_2) $, $ \dots $, $ \lcp(\max_{n - 1}, \min_n) $, $ \lcp (\min_n, \max_n) $.
\end{definition}

From the above characterization of $ \min_i $ and $ \max_i $, one can prove that for every entry either $\LCP_\mathcal{A}[i]=\infty$ or $\LCP_\mathcal{A}[i]<3n$ (it follows from Fine and Wilf Theorem \cite{FineWilf65,MantaciRRS07}), and  one can design a polynomial time algorithm to compute $ \LCP_\mathcal{A}$.


Unfortunately, the array $ \LCP_\mathcal{A} $ alone is not sufficient for computing matching statistics. Assume that $ T(\pi) = \{u_r, u_{r + 1}, \dots, u_{s - 1}, u_s \} $, and that when we apply the forward search by adding a character $ c $, we obtain $ T(\pi c) = \emptyset $. We must then determine the largest suffix $ \pi' $ of $ T(\pi) $ such that $ T(\pi) $ is \emph{strictly} contained in $ T(\pi') $. Suppose that \emph{every} string in $ I_{u_r} $ is prefixed by $ \pi^R $, and \emph{every} string in $ I_{u_s} $ is prefixed by $ \pi^R $. In particular, both $ \min_r $ and $ \max_s $ are prefixed by $ \pi^R $. In this case, we can proceed like in the algorithm for strings: the desired suffix $ \pi' $ is the one having length $ \max \{\lcp(\max_{r - 1}, \min_r), \lcp(\max_{s}, \min_{s + 1}) \} $, which can be determined using $ \LCP_\mathcal{A} $.
However, in general, even if \emph{some} string in $ I_{u_r} $ must be prefixed by $ \pi^R $, the string $ \min_r $ need not be prefixed by $ \pi^R $, and similarly $ \max_s $ need not be prefixed by $ \pi^R $. The worst-case scenario occurs when $ r = s $. Consider Figure \ref{fig:wheelerdfa}, and assume that $ \pi = heba $. Then, we have $ r = s = 3 $ (note that $ abeh\#\#\#\dots $ is a string in $I_{u_3} $ prefixed by $ \pi^R $). However, both $ \min_3 = abdg\#\#\#\dots $, and $ \max_3 = acei\#\#\#\dots $, are not prefixed by $ \pi^R $. Notice that $ \lcp(\max_2, \min_3) = 3 $ and $ \lcp(\max_3, \min_4) = 3 $, but $ \pi' $ is not the suffix of length 3 of $ \pi $. Indeed, since $ \min_3 $ is only prefixed by the prefix of $ \pi^R $ of length $ 2 $, and $ \max_3 $ is only prefixed by the prefix of $ \pi^R $ of length $ 1 $, we conclude that it must be $ |\pi'| = 2 $. In general, the desired suffix $ \pi' $ is the one having length
$|\pi'|$ given by:
\begin{equation}\label{eq:pi'}
    \max \left\{\,
    \mathrm{min} \{\lcp(\mathrm{max}_{r - 1}, \mathrm{min}_r),\!\lcp(\mathrm{min}_r, \pi^R) \}, 
    \mathrm{min} \{\lcp(\pi^R, \mathrm{max}_s),\!\lcp(\mathrm{max}_{s}, \mathrm{min}_{s + 1}) \}\, \right\}.
\end{equation}
The above formula shows that, in order to compute $\pi'$, in addition to $ \LCP_\mathcal{A} $ it suffices to know the values $ \lcp(\min_r, \pi^R) $ and $ \lcp(\pi^R, \max_s) $ ($\pi'$ is a suffix of $\pi$, so it is determined by its length).  
We now show how our algorithm can efficiently maintain the current pattern $\pi$, the set $ T(\pi) = \{u_r, u_{r + 1}, \dots, u_{s - 1}, u_s \} $ and the values $ \lcp(\min_r, \pi^R) $ and $ \lcp(\pi^R, \max_s) $ during the computation of the matching statistics. We assume that the input automaton is encoded with the rank/select data structures supporting the execution of a step of forward search in $ O(\log |\Sigma|) $ time, see \cite{gagie2017} for details. In addition, we will use the following result.


\begin{lemma}\label{lem:rmq}
    Let $ A[1, n] $ be a sequence of values over an ordered alphabet $ \Sigma $. Consider the following queries: (i) given $ i, j \in [1..n] $, compute the minimum value in $ S[i..j] $, and (ii) given $ t \in [1..n] $ and $ c \in \Sigma $, determine the biggest $ k < t $ (or the smallest $ k > t $) such that $ A[k] < c $. Then, $ A $ can be augumented with a data structure of $ 2n + o(n) $ bits such that query (i) can be answered in constant time and query (ii) can be answered in $ O(\log n) $ time.
\end{lemma}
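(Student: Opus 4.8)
The plan is to reduce both queries to a single succinct range-minimum-query (RMQ) structure. Recall the well-known result that an array $A[1,n]$ can be augmented with a data structure occupying $2n+o(n)$ bits that returns, in $O(1)$ time, the \emph{position} of a minimum entry of any subrange $A[i..j]$ (see, e.g., \cite{ohlebusch2010}); crucially this structure does not itself store the values of $A$, but since in our setting the array $A$ is available we may read any entry $A[k]$ in $O(1)$ time. Query (i) is then immediate: compute $p=\mathrm{RMQ}(i,j)$ in constant time and return $A[p]$.

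For query (ii) I would answer the threshold predecessor request --- the largest $k<t$ with $A[k]<c$ --- by a binary search driven by RMQ. The key observation is that RMQ lets us test, in $O(1)$ time, whether a subrange $A[a..b]$ contains an entry smaller than $c$: it suffices to compute $p=\mathrm{RMQ}(a,b)$ and check whether $A[p]<c$. Writing $r=t-1$, the predicate $s\mapsto[\min A[s..r]\ge c]$ is monotone (shrinking the range can only raise its minimum), so I would binary search for the smallest position $s^\star$ at which it becomes true; the answer is then $s^\star-1$, and there is no valid $k$ precisely when $\min A[1..r]\ge c$. Each step of the search performs one RMQ and one value comparison, for a total of $O(\log n)$ time. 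The symmetric request --- the smallest $k>t$ with $A[k]<c$ --- is handled by the mirror-image binary search over $A[t+1..n]$.

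The main point to get right is that the $2n+o(n)$-bit RMQ structure returns only \emph{positions}, not values, so one must be careful that every comparison against the threshold $c$ is realized by an explicit lookup into $A$; this is exactly what keeps the construction within the stated space bound, since nothing beyond the single RMQ index and the already-given array $A$ is needed for either query. The remaining verification is routine: correctness of the monotone binary search, together with the constant-time cost of each RMQ and of each lookup, yields the claimed $O(1)$ bound for query (i) and the $O(\log n)$ bound for query (ii).
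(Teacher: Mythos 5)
Your proposal is correct and follows essentially the same route as the paper: a succinct $2n+o(n)$-bit range-minimum structure (Fischer--Heun, cited in the paper as \cite{fischer2010}) answers query (i) in constant time, and query (ii) is resolved by an RMQ-driven binary search over $[1,t-1]$, which is exactly the paper's argument phrased as the boundary search of a monotone predicate. Your added remark that the RMQ structure returns positions and that values must be read from $A$ itself is a useful clarification the paper leaves implicit, but it does not change the approach.
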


\begin{proof}
    There exists a data structure of $ 2n + o(n) $ bits that allows to solve range minimum queries in constant time \cite{fischer2010}, so using $ A $ we can solve queries (i) in constant time. Now, let us show how to solve queries (ii). Let $ f_1 $ be the answer of query (i) on input $ i = \lceil t/2 \rceil $ and $ j = t - 1 $. If $ f_1 < c $, then we must keep searching in the interval $ [\lceil t/2 \rceil, t - 1] $, otherwise, we must keep searching in the interval $ [1, \lceil t/2 \rceil - 1] $. In other words, we can answer a query (ii) by means of a binary search on $ [1, t - 1] $, which takes $ O(\log t) $ (and so $ O(\log n) $) time.
\end{proof}


Notice that query (ii) can be seen as a variant of PSV and NSV queries. In the following, we assume that the array $ \LCP_\mathcal{A} $ has been augmented with the data structure of Lemma~\ref{lem:rmq}.


At the beginning we have $ \pi = \epsilon $, so $ T(\epsilon) = \{1, 2, \ldots, n \} $ and trivially $ \lcp(\min_r, \pi^R) = \lcp(\pi^R, \max_s) = 0 $. At each iteration we perform a step of forward search computing $ T(\pi c) $ given $ T(\pi) $; then we distinguish two cases according to whether  $ T(\pi c) $ is empty or not. 

\smallbreak
\noindent{\bf Case 1.} $ T(\pi c) = \{u_{r'}, u_{r' + 1}, \dots, u_{s' - 1}, u_{s'} \} $ is not empty. In that case $\pi c$ will become the pattern at the next iteration. Since we already have  $ T(\pi c) $ we are left with the task of computing $ \lcp(\min_{r'}, c \pi^R) $ and $ \lcp(c \pi^R, \max_{s'}) $. We only show how to compute $ \lcp(\min_{r'}, c\pi^R) $, the latter computation being analogous. Let $ k $ be the smallest integer in $ [1, n] $ such that $ (u_k, u_{r'}) \in E $. Notice that we can easily compute $ k $ by means of standard \textrm{rank}/\textrm{select} operations on the compact data structure used to encode $ \mathcal{A} $. Since $ u_{r'} \in T(\pi c) $, it must be $ k \le s $. Moreover, the characterization of $ \min_{r'} $ that we described above implies that $ \min_{r'} = c \min_k $, hence $ \lcp(\min_{r'}, c\pi^R) = \lcp(c \min_k, c\pi^R) = 1 + \lcp(\min_k, \pi^R) $. To compute $ \lcp(\min_k, \pi^R) $ we distinguish two subcases:
\begin{enumerate}
    \item[$a)$] $ k > r $, hence $ r < k \le s $. Since $ u_r, u_s \in T(\pi) $, there exist $ \alpha \in I_{u_r} $ and $ \beta \in I_{u_s} $ both prefixed by $ \pi^R $. But $ \alpha \preceq \max_r \preceq \min_k \preceq \min_s \preceq \beta $, so $ \min_k $ is also prefixed by $ \pi^R $, and we conclude $ \lcp(\min_k, \pi^R) = |\pi| $.
    \item[$b)$] $ k \le r $. In this case, we have $ \min_k \preceq \max_k \preceq \min_{k + 1} \prec \max_{k + 1} \preceq \dots \preceq \min_r \prec \pi^R $, and therefore $\lcp(\mathrm{min}_k, \pi^R)$ is equal to
    \begin{equation*}
     \mathrm{min} \{\lcp(\mathrm{min}_k, \mathrm{max}_k), \lcp(\mathrm{max}_k, \mathrm{min}_{k + 1}), \lcp(\mathrm{min}_{k + 1}, \mathrm{max}_{k + 1}), \dots, \lcp(\mathrm{min}_r, \pi^R) \}.
    \end{equation*}
    With the above formula we can compute $ \lcp(\min_k, \pi^R) $ using query (i) of Lemma~\ref{lem:rmq} over the range $ \LCP_\mathcal{A}[2k,2r-1] $ and the value $ \lcp(\min_r, \pi^R) $.
\end{enumerate}


\smallbreak
\noindent{\bf Case 2.} $ T(\pi c)$ is empty. In this case at the next iteration the pattern will be largest suffix $ \pi' $ of $ \pi$ such that $ T(\pi) $ is \emph{strictly} contained in $ T(\pi') = \{u_{r''},  \dots,  u_{s''} \}$. We compute $|\pi'|$ using~\eqref{eq:pi'}; if $ |\pi'| > \lcp(\mathrm{min}_r, \pi^R)$  we set $r'' = r$, otherwise we apply query (ii) of Lemma~\ref{lem:rmq} to find the rightmost entry $r''$ in $ \LCP_\mathcal{A}[2,2r - 1] $ smaller than $|\pi'|$. 
Computing $s''$ is analogous.



Given $ T(\pi') = \{u_{r''}, u_{r'' + 1}, \dots, u_{s'' - 1}, u_{s''} \} $, where $ r'' \le r $, $ s \le s'' $, and at least one inequality is strict, we want to compute $ \lcp(\min_{r''}, (\pi')^R) $ and $ \lcp((\pi')^R, \max_{s''}) $. We only consider $ \lcp(\min_{r''}, (\pi')^R) $, the latter computation being analogous. We distinguish two subcases:
\begin{enumerate}
    \item[$a)$] $ r'' = r $. Then $ \lcp(\min_{r''}, (\pi')^R) = \lcp(\min_{r}, (\pi')^R) = \min \{\lcp(\min_{r}, \pi^R), |\pi'| \} $.
    \item[$b)$] $ r'' < r $. In particular, since $ u_{r''} $ is the left endpoint of $ T(\pi') $ and $ |T(\pi')| \ge 2 $, one can prove like in Case $1a)$ that $ \max_{r''} $ is prefixed by $ (\pi')^R $. We immediately conclude that $ \lcp(\min_{r''}, (\pi')^R) = \min \{ \lcp(\min_{r''}, \max_{r''}), |\pi'| \} $, which can be immediately computed since $ \lcp(\min_{r''}, \max_{r''}) $ is a value stored in $ \LCP_\mathcal{A} $.
\end{enumerate}

\noindent
We can summarize the above discussion as follows.

\begin{theorem}
Given a Wheeler DFA $\mathcal A$, there exists a data structure occupying $O(|\mathcal A|)$ words which can compute the pattern matching statistics of a pattern $P$ in time $O(|P| \log |\mathcal A|)$.
\end{theorem}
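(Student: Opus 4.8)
The plan is to assemble the three ingredients developed above into a single structure and then bound its size and the running time of the algorithm. First I would fix the representation: the automaton $\mathcal A$ is stored with the rank/select structures of \cite{gagie2017} that support one forward-search step in $O(\log|\Sigma|)$ time and occupy $O(|\mathcal A|)$ words; the array $\LCP_\mathcal{A}[2,2n]$ is stored explicitly; and this array is augmented with the data structure of Lemma~\ref{lem:rmq}. For the space bound I would argue that $\LCP_\mathcal{A}$ has $2n-1$ entries, each of which is either the flag $\infty$ or an integer smaller than $3n$ (by the Fine and Wilf argument cited above), so each entry fits in $O(\log n)$ bits and the whole array occupies $O(n)$ words, while the augmentation of Lemma~\ref{lem:rmq} adds only $2(2n)+o(n)$ bits. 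Summing the three components gives $O(|\mathcal A|)$ words, using $n=|Q|\le|\mathcal A|$.

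Next I would establish correctness by induction over the iterations, taking as loop invariant that the algorithm holds the current pattern $\pi$, the endpoints $r,s$ of $T(\pi)=\{u_r,\dots,u_s\}$, and the two values $\lcp(\min_r,\pi^R)$ and $\lcp(\pi^R,\max_s)$; the base case $\pi=\epsilon$ is immediate since $T(\epsilon)=\{1,\dots,n\}$ and both values are $0$. For the inductive step I would invoke the two cases already analysed. In Case~1 the forward search returns a nonempty $T(\pi c)$ and the updated values $\lcp(\min_{r'},c\pi^R)$ and $\lcp(c\pi^R,\max_{s'})$ are obtained from the identity $\min_{r'}=c\min_k$ together with subcases $(a)$/$(b)$, the latter computed by a single range-minimum query (query (i)) over $\LCP_\mathcal{A}[2k,2r-1]$ combined with the invariant value $\lcp(\min_r,\pi^R)$. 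In Case~2 the length $|\pi'|$ is computed from~\eqref{eq:pi'}, the new endpoints $r'',s''$ are located by query (ii), and the invariant values for $\pi'$ are restored by subcases $(a)$/$(b)$. Verifying that these updates exactly reestablish the invariant — in particular that $\max_{r''}$ is prefixed by $(\pi')^R$ in Case~2$(b)$, mirroring the monotonicity argument of Lemma~\ref{lem:suffixinterval} — is the part I would write out most carefully.

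For the running time I would reuse the amortization from the string algorithm: every iteration either extends the matched suffix by one character (a successful forward step in Case~1) or permanently discards a nonempty portion of $\pi$ (Case~2), so the number of iterations is at most $2|P|$. Each iteration performs $O(1)$ forward-search steps at $O(\log|\Sigma|)$ each, $O(1)$ range-minimum queries answered in $O(1)$ time, and $O(1)$ PSV/NSV-style queries answered in $O(\log n)$ time by Lemma~\ref{lem:rmq}, plus a constant number of rank/select operations. Hence each iteration costs $O(\log|\Sigma|+\log n)=O(\log|\mathcal A|)$, and the total time is $O(|P|\log|\mathcal A|)$.

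I expect the main obstacle to be the correctness bookkeeping rather than the resource bounds: the delicate point is confirming that formula~\eqref{eq:pi'} really yields the longest suffix $\pi'$ with $T(\pi)\subsetneq T(\pi')$, and that the endpoint and $\lcp$ updates in Case~2 leave the loop invariant intact when $r''<r$ or $s''>s$, since there $\min_{r''}$ and $\max_{s''}$ need not be prefixed by $(\pi')^R$ and one must fall back on the adjacent extremal strings, exactly as in the worst-case example with $r=s$.
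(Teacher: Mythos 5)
Your proposal is correct and follows essentially the same route as the paper: the theorem is stated there as a summary of the preceding discussion, which uses exactly your ingredients --- the rank/select encoding of $\mathcal{A}$, the array $\LCP_\mathcal{A}$ augmented with the structure of Lemma~\ref{lem:rmq}, the loop invariant maintaining $T(\pi)$ together with $\lcp(\min_r,\pi^R)$ and $\lcp(\pi^R,\max_s)$ via Cases~1 and~2, and the $2|P|$ amortization inherited from the string algorithm. Your explicit space accounting (entries of $\LCP_\mathcal{A}$ being $\infty$ or bounded by $3n$, hence $O(\log n)$ bits each) and the explicit induction statement are slightly more detailed than the paper's write-up, but they formalize the same argument rather than replace it.
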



\paragraph*{Funding} TG funded by National Institutes of Health (NIH) NIAID (grant no.\ HG011392), the National Science Foundation NSF IIBR (grant no.\ 2029552) and a Natural Science and Engineering Research Council (NSERC) Discovery Grant (grant no.\ RGPIN-07185-2020). GM funded by the Italian Ministry of University and Research (PRIN 2017WR7SHH). MS funded by the INdAM-GNCS Project (CUP\_E55F22000270001).
NP funded by the European Union (ERC, REGINDEX, 101039208).
Views and opinions expressed are however those of the author(s) only and do not necessarily reflect those of the European Union or the European Research Council. Neither the European Union nor the granting authority can be held responsible for them.

\Section{References}
\bibliographystyle{IEEEbib}
\bibliography{refs}

\end{document}